\providecommand{\U}[1]{\protect\rule{.1in}{.1in}}
\newtheorem{theorem}{Theorem}
\newtheorem{lemma}[theorem]{Lemma}
\newenvironment{proof}[1][Proof]{\noindent\textbf{#1.} }{\ \rule{0.5em}{0.5em}}
\begin{document}

\title{Recoverability for Holevo's just-as-good fidelity}
\author{\IEEEauthorblockN{Mark M.~Wilde}
\IEEEauthorblockA{Hearne Institute for Theoretical Physics,
Department of Physics and Astronomy,
Center for Computation and Technology,\\
Louisiana State University,
Baton Rouge, Louisiana 70803, USA,
Email: mwilde@lsu.edu} }
\maketitle

\begin{abstract}
Holevo's just-as-good fidelity is a similarity measure for quantum states that
has found several applications. One of its critical
properties is that it obeys a data processing inequality:\ the measure does
not decrease under the action of a quantum channel on the underlying states.
In this paper, I prove a refinement of this data processing inequality that
includes an additional term related to recoverability. That is, if the increase in the
measure is small after the action of a partial trace, then one of the states
can be nearly recovered by the Petz recovery channel, while the other state is
perfectly recovered by the same channel. The refinement is given in terms of
the trace distance of one of the states to its recovered version and also
depends on the minimum eigenvalue of the other state. As such, the refinement
is universal, in the sense that the recovery channel depends only on one of
the states, and it is explicit, given by the Petz recovery channel. The appendix contains a generalization of the aforementioned result to arbitrary quantum channels.

\end{abstract}

\section{Introduction}

In Holevo's seminal 1972 work on the quasiequivalence of locally normal states
\cite{Kholevo1972}, he established the following inequalities for quantum
states $\rho$ and$~\sigma$:%
\begin{equation}
1-\sqrt{F_{H}(\rho,\sigma)}\leq\frac{1}{2}\left\Vert \rho-\sigma\right\Vert
_{1}\leq\sqrt{1-F_{H}(\rho,\sigma)}, \label{eq:holevo-ineq}%
\end{equation}
where $\left\Vert \rho-\sigma\right\Vert _{1}$ denotes the well known trace
distance and the function $F_{H}$ is Holevo's \textquotedblleft just-as-good
fidelity,\textquotedblright\ defined as%
\begin{equation}
F_{H}(\rho,\sigma)\equiv\left[  \operatorname{Tr}\{\sqrt{\rho}\sqrt{\sigma
}\}\right]  ^{2}.
\end{equation}
After writing it down, he then remarked that \textquotedblleft it is evident
that $F_{H}$ is just as good a measure of proximity of the states $\rho$ and
$\sigma$ as $\left\Vert \rho-\sigma\right\Vert _{1}$.\textquotedblright\ And
so it is that the measure $F_{H}$ is known as Holevo's just-as-good fidelity.

Some years after this, Uhlmann defined the quantum fidelity as 
$
F(\rho,\sigma)\equiv\left\Vert \sqrt{\rho}\sqrt{\sigma}\right\Vert _{1}^{2}$ \cite{U76}.
It is evident that the following relation holds%
\begin{equation}
F_{H}(\rho,\sigma)\leq F(\rho,\sigma),
\label{eq:F_H-to-F}
\end{equation}
due to the variational characterization of the trace norm of a square operator
$X$ as%
\begin{equation}
\left\Vert X\right\Vert _{1}=\max_{U}\left\vert \operatorname{Tr}%
\{XU\}\right\vert , \label{eq:tr-norm-var}%
\end{equation}
where the optimization is with respect to a unitary operator $U$. Many years
after this, at the dawn of quantum computing, with more growing interest in
quantum information theory, Fuchs and van de Graaf presented the following
widely employed inequalities \cite{FG98}:%
\begin{equation}
1-\sqrt{F(\rho,\sigma)}\leq\frac{1}{2}\left\Vert \rho-\sigma\right\Vert
_{1}\leq\sqrt{1-F(\rho,\sigma)}, \label{eq:FvdG}%
\end{equation}
which bear a striking similarity to \eqref{eq:holevo-ineq}. Indeed the lower
bound on $\frac{1}{2}\left\Vert \rho-\sigma\right\Vert _{1}$ in
\eqref{eq:FvdG} is an immediate consequence of \eqref{eq:F_H-to-F} and the lower bound in
\eqref{eq:holevo-ineq}. The upper bound on $\frac
{1}{2}\left\Vert \rho-\sigma\right\Vert _{1}$ in \eqref{eq:FvdG} can be proven
by first showing that it is achieved for pure states, employing Uhlmann's
\textquotedblleft transition probability\textquotedblright\ characterization
of $F(\rho,\sigma)$ \cite{U76}, and then invoking monotonicity of trace
distance with respect to partial trace. The latter inequalities in
\eqref{eq:FvdG} have been more widely employed in quantum information theory
than those in \eqref{eq:holevo-ineq} due to Uhlmann's \textquotedblleft
transition probability\textquotedblright\ characterization of $F(\rho,\sigma)$
and its many implications.

Nevertheless, Holevo's just-as-good fidelity is clearly a useful measure of
similarity for quantum states in light of \eqref{eq:holevo-ineq}, and it has
found several applications in quantum information theory. For example, it
serves as an upper bound on the probability of error in discriminating $\rho$
from $\sigma$ in a hypothesis testing experiment \cite{ACMBMAV07,CMMAB08},
which in some sense is just a rewriting of the lower bound in
\eqref{eq:holevo-ineq} (see also \cite[Lemma~3.2]{H06book} in this context).
In turn, this way of thinking has led to particular decoders for quantum polar codes
\cite{WG11,GW12}.

The function $F_H$ has also been rediscovered a number of times. For example, it is a
particular case of Petz's quasi-entropies \cite{P85,P86}. It was studied under
the name \textquotedblleft quantum affinity\textquotedblright\ in
\cite{LZ04}\ and shown to be equal to the fidelity of the canonical
purifications of quantum states in \cite{Winter01a}.

One of the most important properties of $F_{H}$ is that it obeys the following
data processing inequality:%
\begin{equation}
F_{H}(\mathcal{N}(\rho),\mathcal{N}(\sigma))\geq F_{H}(\rho,\sigma),
\label{eq:fid-data-proc}%
\end{equation}
where $\mathcal{N}$ is a quantum channel (a completely positive and trace
preserving map). This inequality is a
consequence of data processing for Petz's more general quasi-entropies
\cite{P85,P86}. This property is one reason that $F_{H}$ has an interpretation
as a similarity measure: the states $\rho$ and $\sigma$ generally  become
more similar under the action of a quantum channel.

The main contribution of this paper is the following refinement of the data
processing inequality in \eqref{eq:fid-data-proc}, in the case that $\rho$ is a bipartite density operator, $\sigma$ is a positive definite bipartite operator, and the channel is
a partial trace over the $B$ system:%
\begin{multline}
\sqrt{F_{H}}(\rho_{A},\sigma
_{A})\geq
\sqrt{F_{H}}(\rho_{AB},\sigma_{AB})
\label{eq:main-result}\\
+\frac{\pi^{2}}{432} \frac{\lambda_{\min}(\sigma_{AB})}{\operatorname{Tr}\{\sigma_{A}%
\}}\left\Vert \mathcal{R}_{A\rightarrow AB}^{\sigma}(\rho_{A})-\rho
_{AB}\right\Vert _{1}^{3},
\end{multline}
where $\lambda_{\min}(\sigma_{AB})$ is the minimum eigenvalue of
$\sigma_{AB}$ and%
\begin{equation}
\mathcal{R}_{A\rightarrow AB}^{\sigma}(\cdot)\equiv\sigma_{AB}^{1/2}\left[
\sigma_{A}^{-1/2}(\cdot)_{A}\sigma_{A}^{-1/2}\otimes I_{B}\right]  \sigma
_{AB}^{1/2}%
\end{equation}
is a quantum channel known as the Petz recovery channel
\cite{Petz1986,Petz1988}. The interpretation of this inequality is the same as
that given in previous work on this topic of refining data processing
inequalities (see, e.g., \cite{FR14,W15,Junge15}). If the difference
$\sqrt{F_{H}}(\rho_{A},\sigma_{A})-\sqrt{F_{H}}(\rho_{AB},\sigma_{AB})$ is
small, then one can approximately recover the state $\rho_{AB}$ from its
marginal $\rho_{A}$.
The appendix generalizes the result in
\eqref{eq:main-result} to arbitrary quantum channels.

The technique that I use for proving the above data processing refinement
closely follows the elegant approach recently put forward by Carlen and Vershynina in
\cite{CV17}. This technique appears to be different from every other approach, given
in recent years since \cite{FR14}, that has established refinements of data
processing inequalities. It builds on Petz's approach from \cite{P85,P86}\ for
proving data processing for the quantum relative entropy, as well as ideas in
\cite{Petz03}. Here, I use this same technique and establish a general lemma
regarding remainder terms for data processing with Petz's quasi-entropies, and
then I\ specialize it to obtain the inequality in \eqref{eq:main-result}.

An interesting aspect of  \eqref{eq:main-result} is that the recovery
channel is explicit, given in the Petz form, and universal, having no
dependence on the state $\rho_{AB}$ while depending only on $\sigma_{AB}$.

In the rest of the paper, I begin by giving background and establish some
notation. After that, I prove a general lemma that refines data processing for
Petz's quasi-entropies. Then I specialize it to arrive at the inequality in
\eqref{eq:main-result}.

\section{Background and Notation}

I begin by reviewing some background and establish notation. Basic concepts of
quantum information theory can be found in \cite{H06book,H12,W17}. Let $f$ be
an operator convex function defined on $[0,\infty)$. Examples include
$f(x)=x\ln x$, $f(x)=-x^{\alpha}$ for $\alpha\in(0,1)$, $f(x)=x^{\alpha}$ for
$\alpha\in(1,2]$. According to \cite[Section~8]{HMPB11}, such a function has
the following integral representation:%
\begin{multline}
f(x)=f(0)+ax+bx^{2}\\
+\int_{0}^{\infty}d\mu(t)\ \left(  \frac{x}{1+t}-1+\frac{t}{x+t}\right)  ,
\label{eq:integral-rep}%
\end{multline}
where $a\in\mathbb{R}$, $b\geq0$, and $\mu$ is a non-negative measure on
$(0,\infty)$ satisfying
$
\int_{0}^{\infty}
d\mu(t) / \left(  1+t\right)  ^{2}<\infty$.

Define the maximally entangled vector as
\begin{equation}
\left\vert \Gamma\right\rangle _{S\tilde{S}}\equiv\sum_{i=0}^{\left\vert
S\right\vert -1}\left\vert i\right\rangle _{S}\left\vert i\right\rangle
_{\tilde{S}},
\end{equation}
for orthonormal bases $\{\left\vert i\right\rangle _{S}\}_{i}$ and
$\{\left\vert i\right\rangle _{\tilde{S}}\}_{i}$, and for a positive
semi-definite operator $\sigma$, define its canonical purification by%
\begin{equation}
\left\vert \varphi^{\sigma}\right\rangle _{S\tilde{S}}\equiv\left(  \sigma
_{S}^{1/2}\otimes I_{\tilde{S}}\right)  \left\vert \Gamma\right\rangle
_{S\tilde{S}}.
\end{equation}
Then, following Petz \cite{P85,P86,P10,P10a}, as well as what was
discussed later in \cite{TCR09,S10}, we define the $f$-quasi-relative entropy
$Q_{f}(\rho\Vert\sigma)$ of a density operator $\rho$ and a positive
definite operator $\sigma$ as%
\begin{equation}
Q_{f}(\rho\Vert\sigma)\equiv\left\langle \varphi^{\sigma}\right\vert
_{S\tilde{S}}f\left(  \sigma_{S}^{-1}\otimes\rho_{\tilde{S}}^{T}\right)
\left\vert \varphi^{\sigma}\right\rangle _{S\tilde{S}}.
\end{equation}
For example, when $f(x)=x\ln x$, then $Q_{f}$ reduces to the quantum relative
entropy\ from \cite{U62}.

Now consider the bipartite case and define%
\begin{equation}
\left\vert \Gamma\right\rangle _{A\hat{A}B\hat{B}}\equiv\left\vert
\Gamma\right\rangle _{A\hat{A}}\otimes\left\vert \Gamma\right\rangle
_{B\hat{B}}.
\end{equation}
We can also write this as $\left\vert \Gamma\right\rangle _{AB\hat{A}\hat{B}}$
with it being understood that there is a permutation of systems. Then, by the
above, we have for a density operator $\rho_{AB}$ and a positive definite
operator $\sigma_{AB}$ that%
\begin{multline}
Q_{f}(\rho_{AB}\Vert\sigma_{AB})\\
=\left\langle \varphi^{\sigma_{AB}}\right\vert _{AB\hat{A}\hat{B}}f\!\left(
\sigma_{AB}^{-1}\otimes\rho_{\hat{A}\hat{B}}^{T}\right)  \left\vert
\varphi^{\sigma_{AB}}\right\rangle _{AB\hat{A}\hat{B}}.
\end{multline}
Now define the linear operator $V$ by%
\begin{equation}
V_{A\hat{A}\rightarrow AB\hat{A}\hat{B}}\equiv\sigma_{AB}^{1/2}\left(
\sigma_{A}^{-1/2}\otimes I_{\hat{A}}\right)  \left\vert \Gamma\right\rangle
_{B\hat{B}}.
\end{equation}
This linear operator is an isometric extension of the Petz recovery channel, as
discussed recently in \cite{W17f}. One can readily verify that $V$\ is an isometry and that%
\begin{align}
V_{A\hat{A}\rightarrow AB\hat{A}\hat{B}}\left\vert \varphi^{\sigma_{A}%
}\right\rangle _{A\hat{A}} & =\left\vert \varphi^{\sigma_{AB}}\right\rangle
_{AB\hat{A}\hat{B}},\\
V^{\dag}\left(  \sigma_{AB}^{-1}\otimes\rho_{\hat{A}\hat{B}}^{T}\right)
V & =\sigma_{A}^{-1}\otimes\rho_{\hat{A}}^{T}.\label{eq:V-sandw}%
\end{align}
For simple proofs of these properties, see, e.g., \cite{TCR09} or \cite{W17f}.
With all these notions in place, we can recall Petz's approach
\cite{P85,P86,P10,P10a}\ for establishing monotonicity of the $f$%
-quasi-relative entropy under partial trace:%
\begin{align}
&  Q_{f}(\rho_{AB}\Vert\sigma_{AB})
  =\left\langle \varphi^{\sigma_{AB}}\right\vert _{AB\hat{A}\hat{B}}f\!\left(
\sigma_{AB}^{-1}\otimes\rho_{\hat{A}\hat{B}}^{T}\right)  \left\vert
\varphi^{\sigma_{AB}}\right\rangle _{AB\hat{A}\hat{B}}\nonumber \\
&  =\left\langle \varphi^{\sigma_{A}}\right\vert _{A\hat{A}}V^{\dag}f\!\left(
\sigma_{AB}^{-1}\otimes\rho_{\hat{A}\hat{B}}^{T}\right)  V\left\vert
\varphi^{\sigma_{A}}\right\rangle _{A\hat{A}}\nonumber\\
&  \geq\left\langle \varphi^{\sigma_{A}}\right\vert _{A\hat{A}}f\!\left(
V^{\dag}\left[  \sigma_{AB}^{-1}\otimes\rho_{\hat{A}\hat{B}}^{T}\right]
V\right)  \left\vert \varphi^{\sigma_{A}}\right\rangle _{A\hat{A}}\nonumber\\
&  =\left\langle \varphi^{\sigma_{A}}\right\vert _{A\hat{A}}f\!\left(
\sigma_{A}^{-1}\otimes\rho_{\hat{A}}^{T}\right)  \left\vert \varphi
^{\sigma_{A}}\right\rangle _{A\hat{A}}
  =Q_{f}(\rho_{A}\Vert\sigma_{A})
\end{align}
where we made use of everything above and the operator Jensen inequality
\cite{HP03}.

\section{General statement for quasi-entropies}

I now modify the approach from \cite{CV17}\ for lower bounds for relative
entropy differences to use an arbitrary operator convex function $f$
instead. So we are considering the following $f$-quasi-relative entropy
difference:%
\begin{equation}
Q_{f}(\rho_{AB}\Vert\sigma_{AB})-Q_{f}(\rho_{A}\Vert\sigma_{A}).
\end{equation}
Recall the integral representation of $f$ from \eqref{eq:integral-rep}. Let%
\begin{align}
\Delta_{AB\hat{A}\hat{B}}  &  \equiv\sigma_{AB}^{-1}\otimes\rho_{\hat{A}%
\hat{B}}^{T}, \qquad
\Delta_{A\hat{A}}    \equiv\sigma_{A}^{-1}\otimes\rho_{\hat{A}}^{T},\\
V_{A\hat{A}\rightarrow AB\hat{A}\hat{B}}  &  \equiv\sigma_{AB}^{1/2}\left(
\sigma_{A}^{-1/2}\otimes I_{\hat{A}}\right)  \left\vert \Gamma\right\rangle
_{B\hat{B}}%
\end{align}
and recall from \eqref{eq:V-sandw}\ that
$
V^{\dag}\Delta_{AB\hat{A}\hat{B}}V=\Delta_{A\hat{A}}$. This implies
\begin{align}
\left \Vert \Delta_{A\hat{A}} \right \Vert_{\infty}
& =
\left\Vert V^\dag \Delta_{AB\hat{A}\hat{B}} V \right \Vert_{\infty} = 
\left\Vert V V^\dag \Delta_{AB\hat{A}\hat{B}} V V^\dag \right \Vert_{\infty}
\nonumber \\
& \leq \left\Vert  \Delta_{AB\hat{A}\hat{B}} \right \Vert_{\infty}
\end{align}
with the last equality following from isometric invariance of the operator norm
and the inequality from submultiplicativity of the operator norm and the fact that $VV^\dag$ is a projection.

\begin{lemma}
\label{lem:quasi}
Let $\mu$ be a measure. For an operator $X$, define%
\begin{equation}
\nu(X)=\int_{0}^{\infty}\text{d}\mu(t)~t\left(  \frac{1}{t}-\frac{1}%
{t+X}\right)  , \label{eq:mu-int}%
\end{equation}
and for $T>0$, define $\mu([0,T])\equiv\int_{0}^{T}$d$\mu(t)$. For $c>0$,
define $g(c,T)\equiv\int_{T}^{\infty}$d$\mu(t)\ \frac{1}{1+t/c}$. Let
$\rho_{AB}$ be a density operator and $\sigma_{AB}$ a positive definite
operator. Then for all $T>0$, the following inequality holds
\begin{multline}
\left\Vert \left[  \sigma_{AB}^{1/2}\sigma_{A}^{-1/2}\nu(\Delta_{A\hat{A}%
})\sigma_{A}^{1/2}-\nu(\Delta_{AB\hat{A}\hat{B}})\sigma_{AB}^{1/2}\right]
\left\vert \Gamma\right\rangle _{A\hat{A}B\hat{B}}\right\Vert _{2}%
\label{eq:intermediate}\\
\leq\mu([0,T])^{1/2}\left[  Q_{f}(\rho_{AB}\Vert\sigma_{AB})-Q_{f}(\rho
_{A}\Vert\sigma_{A})\right]  ^{1/2}\\
+2g(\left\Vert \Delta_{AB\hat{A}\hat{B}}\right\Vert_{\infty} ,T)\operatorname{Tr}%
\{\sigma_{A}\}.
\end{multline}

\end{lemma}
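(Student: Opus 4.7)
The strategy, following Carlen--Vershynina \cite{CV17}, is to split the measure $\mu$ at the cutoff $T$. As a preliminary reduction, using $\sigma_{AB}^{1/2}\left\vert \Gamma\right\rangle_{A\hat{A}B\hat{B}} = \left\vert \varphi^{\sigma_{AB}}\right\rangle$, $\sigma_{A}^{1/2}\left\vert \Gamma\right\rangle_{A\hat{A}B\hat{B}} = \left\vert \varphi^{\sigma_A}\right\rangle_{A\hat{A}}\otimes\left\vert \Gamma\right\rangle_{B\hat{B}}$, the definition of $V$, and $V\left\vert \varphi^{\sigma_A}\right\rangle = \left\vert \varphi^{\sigma_{AB}}\right\rangle$, the vector inside the two-norm can be rewritten as $V\nu(\Delta_{A\hat{A}})\left\vert \varphi^{\sigma_A}\right\rangle - \nu(\Delta_{AB\hat{A}\hat{B}})\left\vert \varphi^{\sigma_{AB}}\right\rangle$. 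Substituting the integral representation \eqref{eq:mu-int} and using $X/(t+X) = I - t/(t+X)$ so that the identity-like contributions cancel, this collapses to $\int_0^\infty d\mu(t)\,v_t$ where $v_t \equiv \frac{t}{t+\Delta_{AB\hat{A}\hat{B}}}\left\vert \varphi^{\sigma_{AB}}\right\rangle - V\frac{t}{t+\Delta_{A\hat{A}}}\left\vert \varphi^{\sigma_A}\right\rangle = V\frac{\Delta_{A\hat{A}}}{t+\Delta_{A\hat{A}}}\left\vert \varphi^{\sigma_A}\right\rangle - \frac{\Delta_{AB\hat{A}\hat{B}}}{t+\Delta_{AB\hat{A}\hat{B}}}\left\vert \varphi^{\sigma_{AB}}\right\rangle$.

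Splitting $\int_0^\infty = \int_0^T + \int_T^\infty$ and invoking the triangle inequality reduces the task to bounding each piece. For the short-$t$ integral, two applications of Cauchy--Schwarz yield $\| \int_0^T d\mu(t)\,v_t\|_2 \leq \mu([0,T])^{1/2}\bigl(\int_0^T d\mu(t)\,\|v_t\|_2^2\bigr)^{1/2}$, and the per-level estimate I would establish is $\|v_t\|_2^2 \leq G_t$, where $G_t \equiv \left\langle \varphi^{\sigma_{AB}}\right\vert \frac{t}{t+\Delta_{AB}}\left\vert \varphi^{\sigma_{AB}}\right\rangle - \left\langle \varphi^{\sigma_A}\right\vert \frac{t}{t+\Delta_A}\left\vert \varphi^{\sigma_A}\right\rangle \geq 0$ is the Jensen gap for the operator-convex function $x\mapsto t/(t+x)$. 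Integrating $G_t$ against $d\mu$ and comparing with the integral representation of $f$---the $f(0)$ and linear-in-$x$ contributions cancel under partial trace, while the $bx^2$ term contributes a non-negative remainder---gives $\int_0^\infty d\mu(t)\,G_t \leq Q_f(\rho_{AB}\Vert\sigma_{AB}) - Q_f(\rho_A\Vert\sigma_A)$. For the tail, the triangle inequality combined with the second form of $v_t$, the isometry of $V$, and the bound $\|\Delta/(t+\Delta)\left\vert \varphi^{\sigma}\right\rangle\|_2 \leq \|\Delta/(t+\Delta)\|_\infty \,\|\varphi^\sigma\|_2 = [1/(1+t/\|\Delta\|_\infty)]\sqrt{\operatorname{Tr}\sigma}$, together with $\|\Delta_{A\hat{A}}\|_\infty \leq \|\Delta_{AB\hat{A}\hat{B}}\|_\infty$ (already shown in the excerpt) and $\operatorname{Tr}\sigma_A = \operatorname{Tr}\sigma_{AB}$, gives a per-$t$ bound proportional to $1/(1+t/\|\Delta_{AB\hat{A}\hat{B}}\|_\infty)$; integrating against $d\mu$ on $[T,\infty)$ reproduces the stated tail factor.

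The principal technical obstacle is the per-level bound $\|v_t\|_2^2 \leq G_t$. Writing $\|v_t\|_2^2 = \left\langle \varphi^{\sigma_{AB}}\right\vert (A_t - B_t)^2\left\vert \varphi^{\sigma_{AB}}\right\rangle$ with $A_t = t/(t+\Delta_{AB})$ and $B_t = V\,t/(t+\Delta_A)V^\dag$, and observing $A_t^2 \leq A_t$ and $B_t^2 \leq B_t$ since both operators lie in $[0,I]$, the remaining step is to control the cross term $\left\langle \varphi^{\sigma_{AB}}\right\vert A_tB_t\left\vert \varphi^{\sigma_{AB}}\right\rangle$ using operator Jensen together with the crucial fact that $\left\vert \varphi^{\sigma_{AB}}\right\rangle$ lies in the range of $V$ (equivalently, $V V^\dag$ is a projection onto a subspace containing it). This step embodies the essential novelty of the Carlen--Vershynina technique.
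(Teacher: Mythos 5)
Your overall architecture coincides with the paper's: rewrite the vector as $V\nu(\Delta_{A\hat{A}})\vert\varphi^{\sigma_{A}}\rangle-\nu(\Delta_{AB\hat{A}\hat{B}})\vert\varphi^{\sigma_{AB}}\rangle=\int_{0}^{\infty}d\mu(t)\,v_{t}$, split at $T$, apply Cauchy--Schwarz on $[0,T]$, bound the tail by the sup-norm estimate, and reduce everything to the two facts $\Vert v_{t}\Vert_{2}^{2}\leq G_{t}$ and $\int_{0}^{\infty}d\mu(t)\,G_{t}\leq Q_{f}(\rho_{AB}\Vert\sigma_{AB})-Q_{f}(\rho_{A}\Vert\sigma_{A})$. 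The head, the tail, and the integrated bound on $G_{t}$ are all handled correctly. The gap is precisely in the step you flag as the principal obstacle, and the route you commit to for it cannot be completed. Once you apply $A_{t}^{2}\leq A_{t}$ and $B_{t}^{2}\leq B_{t}$ to $\Vert v_{t}\Vert_{2}^{2}=\langle\varphi^{\sigma_{AB}}\vert(A_{t}-B_{t})^{2}\vert\varphi^{\sigma_{AB}}\rangle$, reaching $G_{t}=\langle\varphi^{\sigma_{AB}}\vert A_{t}\vert\varphi^{\sigma_{AB}}\rangle-\langle\varphi^{\sigma_{AB}}\vert B_{t}\vert\varphi^{\sigma_{AB}}\rangle$ forces you to prove
\begin{equation*}
\operatorname{Re}\langle\varphi^{\sigma_{AB}}\vert A_{t}B_{t}\vert\varphi^{\sigma_{AB}}\rangle\geq\langle\varphi^{\sigma_{AB}}\vert B_{t}\vert\varphi^{\sigma_{AB}}\rangle,
\end{equation*}
and this is false in general. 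Take $B$ trivial, so that $A_{t}=B_{t}=t/(t+\Delta_{AB\hat{A}\hat{B}})$: then $v_{t}=0$ and $G_{t}=0$, but your intermediate upper bound is $2\langle\varphi^{\sigma_{AB}}\vert A_{t}-A_{t}^{2}\vert\varphi^{\sigma_{AB}}\rangle>0$ whenever $\Delta$ is not degenerate on $\vert\varphi^{\sigma_{AB}}\rangle$. So no control of the cross term can rescue the chain; the termwise relaxations $A_{t}^{2}\leq A_{t}$, $B_{t}^{2}\leq B_{t}$ already discard the cancellation you need.

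The paper avoids this by never expanding $(A_{t}-B_{t})^{2}$ termwise. Writing $v_{t}=-t\vert\varphi^{w^{t}}\rangle$, it uses $\Delta_{AB\hat{A}\hat{B}}+t\geq t$ to get $\Vert v_{t}\Vert_{2}^{2}=t^{2}\Vert\vert\varphi^{w^{t}}\rangle\Vert_{2}^{2}\leq t\,\langle\varphi^{w^{t}}\vert(\Delta_{AB\hat{A}\hat{B}}+t)\vert\varphi^{w^{t}}\rangle$, and the point is that this $\Delta$-weighted quadratic form evaluates \emph{exactly} to $G_{t}$: expanding and using $V^{\dag}(\Delta_{AB\hat{A}\hat{B}}+t)V=\Delta_{A\hat{A}}+t$ together with $V^{\dag}\vert\varphi^{\sigma_{AB}}\rangle=\vert\varphi^{\sigma_{A}}\rangle$, the ``$VV$'' term and each cross term all collapse to $\langle\varphi^{\sigma_{A}}\vert t(\Delta_{A\hat{A}}+t)^{-1}\vert\varphi^{\sigma_{A}}\rangle$, leaving $\langle A_{t}\rangle-\langle B_{t}\rangle$. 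In other words, the cancellation lives in the weighted inner product as a whole, not in separate operator inequalities for the three terms. If you replace your per-level argument with this identity (everything else in your outline stands), the proof matches the paper's.
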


\begin{proof}
The proof follows \cite{CV17} quite closely at times but also features some departures.
Since $V$ is an isometry satisfying $V^{\dag}V=I_{A\hat{A}}$, it follows that
$VV^{\dag}$ is a projection, so that $VV^{\dag}\leq I_{AB\hat{A}\hat{B}}$.
Using the integral representation in \eqref{eq:integral-rep}, we arrive at the chain of inequalities in \eqref{eq:chain-1}\begin{figure*}[ptb]%
\begin{align}
  Q_{f}(\rho_{AB}\Vert\sigma_{AB})
&  =\left\langle \varphi^{\sigma_{A}}\right\vert _{A\hat{A}}V^{\dag}\left[
f\!\left(  \Delta_{AB\hat{A}\hat{B}}\right)  \right]  V\left\vert \varphi
^{\sigma_{A}}\right\rangle _{A\hat{A}}
\nonumber \\
&  =\left\langle \varphi^{\sigma_{A}}\right\vert _{A\hat{A}}V^{\dag}\left[
f(0)+a\Delta_{AB\hat{A}\hat{B}}+b\Delta_{AB\hat{A}\hat{B}}^{2}+\int
_{0}^{\infty}d\mu(t)\ \left(  \frac{\Delta_{AB\hat{A}\hat{B}}}{1+t}-1+\frac
{t}{\Delta_{AB\hat{A}\hat{B}}+t}\right)  \right]  V\left\vert \varphi
^{\sigma_{A}}\right\rangle _{A\hat{A}}
\nonumber \\
&  =f(0)+\left\langle \varphi^{\sigma_{A}}\right\vert _{A\hat{A}}\left[
aV^{\dag}\Delta_{AB\hat{A}\hat{B}}V+bV^{\dag}\Delta_{AB\hat{A}\hat{B}}%
^{2}V\right]  \left\vert \varphi^{\sigma_{A}}\right\rangle _{A\hat{A}%
}\nonumber\\
&  \qquad+\left\langle \varphi^{\sigma_{A}}\right\vert _{A\hat{A}}\left[
\int_{0}^{\infty}d\mu(t)\ \left(  \frac{V^{\dag}\Delta_{AB\hat{A}\hat{B}}%
V}{1+t}-1+V^{\dag}\frac{t}{\Delta_{AB\hat{A}\hat{B}}+t}V\right)  \right]
\left\vert \varphi^{\sigma_{A}}\right\rangle _{A\hat{A}}\nonumber \\
&  \geq f(0)+\left\langle \varphi^{\sigma_{A}}\right\vert _{A\hat{A}}\left[
aV^{\dag}\Delta_{AB\hat{A}\hat{B}}V+bV^{\dag}\Delta_{AB\hat{A}\hat{B}}%
VV^{\dag}\Delta_{AB\hat{A}\hat{B}}V\right]  \left\vert \varphi^{\sigma_{A}%
}\right\rangle _{A\hat{A}}\nonumber\\
&  \qquad+\left\langle \varphi^{\sigma_{A}}\right\vert _{A\hat{A}}\left[
\int_{0}^{\infty}d\mu(t)\ \left(  \frac{V^{\dag}\Delta_{AB\hat{A}\hat{B}}%
V}{1+t}-1+V^{\dag}\frac{t}{\Delta_{AB\hat{A}\hat{B}}+t}V\right)  \right]
\left\vert \varphi^{\sigma_{A}}\right\rangle _{A\hat{A}}\nonumber \\
&  =f(0)+\left\langle \varphi^{\sigma_{A}}\right\vert _{A\hat{A}}\left[
a\Delta_{A\hat{A}}+b\Delta_{A\hat{A}}^{2}+\int_{0}^{\infty}d\mu(t)\ \left(
\frac{\Delta_{A\hat{A}}}{1+t}-1+V^{\dag}\left(  \frac{t}{\Delta_{AB\hat{A}%
\hat{B}}+t}\right)  V\right)  \right]  \left\vert \varphi^{\sigma_{A}%
}\right\rangle _{A\hat{A}},
\label{eq:chain-1}
\end{align}
\hrulefill
\end{figure*}, where we made use of \eqref{eq:V-sandw} and the
fact that $VV^{\dag}$ is a projection so that $VV^{\dag}\leq I_{AB\hat{A}%
\hat{B}}$. Similarly, we find that%
\begin{align}
&  Q_{f}(\rho_{A}\Vert\sigma_{A})  =\left\langle \varphi^{\sigma_{A}}\right\vert _{A\hat{A}}\left[  f\!\left(
\Delta_{A\hat{A}}\right)  \right]  \left\vert \varphi^{\sigma_{A}%
}\right\rangle _{A\hat{A}} \nonumber\\
&  =f(0)+\left\langle \varphi^{\sigma_{A}}\right\vert _{A\hat{A}}\left[
a\Delta_{A\hat{A}}+b\Delta_{A\hat{A}}^{2}\right]  \left\vert \varphi
^{\sigma_{A}}\right\rangle _{A\hat{A}}\nonumber\\
&  +\int_{0}^{\infty}d\mu(t) \left\langle \varphi^{\sigma_{A}}\right\vert
_{A\hat{A}}\left(  \frac{\Delta_{A\hat{A}}}{1+t}-1+\frac{t}{\Delta_{A\hat{A}%
}+t}\right)  \left\vert \varphi^{\sigma_{A}}\right\rangle _{A\hat{A}}.
\end{align}
Thus, we find that%
\begin{multline}
Q_{f}(\rho_{AB}\Vert\sigma_{AB})-Q_{f}(\rho_{A}\Vert\sigma_{A})\geq\\
\int_{0}^{\infty}d\mu(t)\ t \left\langle \varphi^{\sigma_{A}}\right\vert
_{A\hat{A}}\Bigg[V^{\dag}\left(  \Delta_{AB\hat{A}\hat{B}}+t\right)  ^{-1}V\\
-\left(  \Delta_{A\hat{A}}+t\right)  ^{-1}\Bigg]\left\vert \varphi^{\sigma
_{A}}\right\rangle _{A\hat{A}}.
\end{multline}
Now consider that for $t > 0$%
\begin{align}
&  t \left\langle \varphi^{\sigma_{A}}\right\vert _{A\hat{A}}\left[  V^{\dag
}(  \Delta_{AB\hat{A}\hat{B}}+t)  ^{-1}V-\left(  \Delta_{A\hat{A}%
}+t\right)  ^{-1}\right]  \left\vert \varphi^{\sigma_{A}}\right\rangle
_{A\hat{A}}\nonumber\\
&  =t\, \langle\varphi^{w^{t}}|_{AB\hat{A}\hat{B}}\left(  \Delta_{AB\hat{A}%
\hat{B}}+t\right)  |\varphi^{w^{t}}\rangle_{AB\hat{A}\hat{B}}\nonumber\\
&  \geq t^{2} \left\Vert |\varphi^{w^{t}}\rangle_{AB\hat{A}\hat{B}%
}\right\Vert _{2}^{2},
\end{align}
where%
\begin{multline}
|\varphi^{w^{t}}\rangle_{AB\hat{A}\hat{B}}\equiv V\left(  \Delta_{A\hat{A}%
}+t\right)  ^{-1}\left\vert \varphi^{\sigma_{A}}\right\rangle _{A\hat{A}}\\
-\left(  \Delta_{AB\hat{A}\hat{B}}+t\right)  ^{-1}\left\vert \varphi
^{\sigma_{AB}}\right\rangle _{AB\hat{A}\hat{B}}.
\end{multline}
Consider that%
\begin{align}
|\varphi^{w^{t}}\rangle_{AB\hat{A}\hat{B}}  &  =\sigma_{AB}^{1/2}\sigma
_{A}^{-1/2}\left(  \Delta_{A\hat{A}}+t\right)  ^{-1}\sigma_{A}^{1/2}\left\vert
\Gamma\right\rangle _{A\hat{A}}\left\vert \Gamma\right\rangle _{B\hat{B}%
}\nonumber\\
&  \quad-\left(  \Delta_{AB\hat{A}\hat{B}}+t\right)  ^{-1}\sigma_{AB}%
^{1/2}\left\vert \Gamma\right\rangle _{A\hat{A}}\left\vert \Gamma\right\rangle
_{B\hat{B}}\\
&  =\bigg[\sigma_{AB}^{1/2}\sigma_{A}^{-1/2}\left(  \Delta_{A\hat{A}%
}+t\right)  ^{-1}\sigma_{A}^{1/2}\nonumber\\
&  \quad-\left(  \Delta_{AB\hat{A}\hat{B}}+t\right)  ^{-1}\sigma_{AB}%
^{1/2}\bigg]\left\vert \Gamma\right\rangle _{A\hat{A}}\left\vert
\Gamma\right\rangle _{B\hat{B}}.
\end{align}
So we set%
\begin{multline}
w_{AB\hat{A}\hat{B}}^{t}\equiv\sigma_{AB}^{1/2}\sigma_{A}^{-1/2}\left(
\Delta_{A\hat{A}}+t\right)  ^{-1}\sigma_{A}^{1/2}\\
-\left(  \Delta_{AB\hat{A}\hat{B}}+t\right)  ^{-1}\sigma_{AB}^{1/2},
\end{multline}
so that%
\begin{equation}
|\varphi^{w^{t}}\rangle_{AB\hat{A}\hat{B}}=w_{AB\hat{A}\hat{B}}^{t}\left\vert
\Gamma\right\rangle _{A\hat{A}B\hat{B}}.
\end{equation}
Now invoking the definition in \eqref{eq:mu-int}\ we find that%
\begin{align}
&  \sigma_{AB}^{1/2}\sigma_{A}^{-1/2}\nu(\Delta_{A\hat{A}})\sigma_{A}%
^{1/2}-\nu(  \Delta_{AB\hat{A}\hat{B}})  \sigma_{AB}^{1/2}%
\nonumber\\
&  =\int_{0}^{\infty}\text{d}\mu(t)~t\ \sigma_{AB}^{1/2}\sigma_{A}%
^{-1/2}\left(  \frac{1}{t}-\frac{1}{t+\Delta_{A\hat{A}}}\right)  \sigma
_{A}^{1/2}\nonumber\\
&  \qquad-\int_{0}^{\infty}\text{d}\mu(t)~t\left(  \frac{1}{t}-\frac
{1}{t+\Delta_{AB\hat{A}\hat{B}}}\right)  \sigma_{AB}^{1/2}\\
&  =\int_{0}^{\infty}\text{d}\mu(t)~t\ \Bigg[\sigma_{AB}^{1/2}\sigma
_{A}^{-1/2}\left(  \frac{1}{t}-\frac{1}{t+\Delta_{A\hat{A}}}\right)
\sigma_{A}^{1/2}\nonumber\\
&  \qquad-\left(  \frac{1}{t}-\frac{1}{t+\Delta_{AB\hat{A}\hat{B}}}\right)
\sigma_{AB}^{1/2}\Bigg]\\
&  =\int_{0}^{\infty}\text{d}\mu(t)~t\ \Bigg[-\sigma_{AB}^{1/2}\sigma
_{A}^{-1/2}\frac{1}{t+\Delta_{A\hat{A}}}\sigma_{A}^{1/2}\nonumber\\
&  \qquad+\frac{1}{t+\Delta_{AB\hat{A}\hat{B}}}\sigma_{AB}^{1/2}\Bigg]\\
&  =-\int_{0}^{\infty}\text{d}\mu(t)~tw_{AB\hat{A}\hat{B}}^{t}.
\end{align}
Thus, for any $T>0$, we have that%
\begin{align}
&  \left\Vert \left[\sigma_{AB}^{1/2}\sigma_{A}^{-1/2}\nu(\Delta_{A\hat{A}}%
)\sigma_{A}^{1/2}-\nu(  \Delta_{AB\hat{A}\hat{B}})  \sigma
_{AB}^{1/2}\right]\left\vert \Gamma\right\rangle _{A\hat{A}B\hat{B}}\right\Vert
_{2}\nonumber\\
&  =\left\Vert \int_{0}^{\infty}\text{d}\mu(t)\ tw_{AB\hat{A}\hat{B}}%
^{t}\left\vert \Gamma\right\rangle _{A\hat{A}B\hat{B}}\right\Vert _{2}\\
&  \leq\int_{0}^{T}\text{d}\mu(t)\ t\left\Vert w_{AB\hat{A}\hat{B}}%
^{t}\left\vert \Gamma\right\rangle _{A\hat{A}B\hat{B}}\right\Vert
_{2}\nonumber\\
&  \qquad+\left\Vert \int_{T}^{\infty}\text{d}\mu(t)\ t\ w_{AB\hat{A}\hat{B}%
}^{t}\left\vert \Gamma\right\rangle _{A\hat{A}B\hat{B}}\right\Vert _{2}%
\end{align}
Let us study the two terms separately. For the first term, from
Cauchy--Schwarz%
\begin{equation*}
\left\vert \int_{0}^{T}\text{d}\mu(t)f(t)g(t)\right\vert ^{2}
\leq\left[  \int_{0}^{T}\text{d}\mu(t)f^{2}(t)\right]  \left[  \int_{0}%
^{T}\text{d}\mu(t)g^{2}(t)\right]  ,
\end{equation*}
we have that%
\begin{align}
&  \left[  \int_{0}^{T}\text{d}\mu(t)\ t\left\Vert w_{AB\hat{A}\hat{B}}%
^{t}\left\vert \Gamma\right\rangle _{A\hat{A}B\hat{B}}\right\Vert _{2}\right]
^{2}\nonumber\\
&  \leq\mu([0,T])\int_{0}^{T}\text{d}\mu(t)\ t^{2}\left\Vert w_{AB\hat{A}%
\hat{B}}^{t}\left\vert \Gamma\right\rangle _{A\hat{A}B\hat{B}}\right\Vert
_{2}^{2}\\
&  \leq\mu([0,T])\int_{0}^{\infty}\text{d}\mu(t)\ t^{2}\left\Vert w_{AB\hat
{A}\hat{B}}^{t}\left\vert \Gamma\right\rangle _{A\hat{A}B\hat{B}}\right\Vert
_{2}^{2}\\
&  \leq\mu([0,T])\left[  Q_{f}(\rho_{AB}\Vert\sigma_{AB})-Q_{f}(\rho_{A}%
\Vert\sigma_{A})\right]  .
\end{align}
Moving to the second term, from the reasoning in the proof of \cite[Theorem~1.7]{CV17}, we find that for
any positive operator~$X$
\begin{align}
t\left(  \frac{1}{t}-\frac{1}{t+X}\right)   &  \leq t\left(  \frac{1}{t}%
-\frac{1}{t+\left\Vert X\right\Vert_{\infty} }\right)  I\\
&  =\frac{1}{1+t/\left\Vert X\right\Vert_{\infty} }I
\end{align}
so that
$
\int_{T}^{\infty}\text{d}\mu(t)\ t\left(  \frac{1}{t}-\frac{1}{t+X}\right)
\leq g(\left\Vert X\right\Vert ,T)\ I$.
This leads to the development in \eqref{eq:last-dev}, \begin{figure*}[ptb]%
\begin{align}
&  \left\Vert \int_{T}^{\infty}\text{d}\mu(t)\ t\ w_{AB\hat{A}\hat{B}}%
^{t}\left\vert \Gamma\right\rangle _{A\hat{A}B\hat{B}}\right\Vert
_{2}\nonumber\\
&  \leq\left\Vert \int_{T}^{\infty}\text{d}\mu(t)\ t\left[  \left(
t^{-1}-\left(  \Delta_{AB\hat{A}\hat{B}}+t\right)  ^{-1}\right)  \sigma
_{AB}^{1/2}-\sigma_{AB}^{1/2}\sigma_{A}^{-1/2}\left(  t^{-1}-\left(
\Delta_{A\hat{A}}+t\right)  ^{-1}\right)  \sigma_{A}^{1/2}\right]  \left\vert
\Gamma\right\rangle _{A\hat{A}B\hat{B}}\right\Vert _{2}\nonumber\\
&  \leq\left\Vert \int_{T}^{\infty}\text{d}\mu(t)\ t\left(  t^{-1}-\left(
\Delta_{AB\hat{A}\hat{B}}+t\right)  ^{-1}\right)  \sigma_{AB}^{1/2}\left\vert
\Gamma\right\rangle _{A\hat{A}B\hat{B}}\right\Vert _{2}+\left\Vert \int
_{T}^{\infty}\text{d}\mu(t)\ t\ \sigma_{AB}^{1/2}\sigma_{A}^{-1/2}\left(
t^{-1}-\left(  \Delta_{A\hat{A}}+t\right)  ^{-1}\right)  \sigma_{A}%
^{1/2}\left\vert \Gamma\right\rangle _{A\hat{A}B\hat{B}}\right\Vert
_{2}\nonumber\\
&  =\left\Vert \int_{T}^{\infty}\text{d}\mu(t)\ t\left(  t^{-1}-\left(
\Delta_{AB\hat{A}\hat{B}}+t\right)  ^{-1}\right)  \sigma_{AB}^{1/2}\left\vert
\Gamma\right\rangle _{A\hat{A}B\hat{B}}\right\Vert _{2}+\left\Vert \int
_{T}^{\infty}\text{d}\mu(t)\ t\left(  t^{-1}-\left(  \Delta_{A\hat{A}%
}+t\right)  ^{-1}\right)  \sigma_{A}^{1/2}\left\vert \Gamma\right\rangle
_{A\hat{A}}\right\Vert _{2}\nonumber\\
&  \leq g(\left\Vert \Delta_{AB\hat{A}\hat{B}}\right\Vert_{\infty} ,T)\left\Vert
\sigma_{AB}^{1/2}\left\vert \Gamma\right\rangle _{A\hat{A}B\hat{B}}\right\Vert
_{2}+g(\left\Vert \Delta_{A\hat{A}}\right\Vert_{\infty} ,T)\left\Vert \sigma_{A}%
^{1/2}\left\vert \Gamma\right\rangle _{A\hat{A}}\right\Vert _{2}\nonumber\\
&  =\left[  g(\left\Vert \Delta_{AB\hat{A}\hat{B}}\right\Vert_{\infty} ,T)+g(\left\Vert
\Delta_{A\hat{A}}\right\Vert_{\infty} ,T)\right]  \operatorname{Tr}\{\sigma_{A}%
\}\leq2g(\left\Vert \Delta_{AB\hat{A}\hat{B}}\right\Vert_{\infty} ,T)\operatorname{Tr}%
\{\sigma_{A}\}. \label{eq:last-dev}%
\end{align}
\hrulefill
\end{figure*} and after putting everything together, we get \eqref{eq:intermediate}.
\end{proof}

\section{Application to Holevo's just-as-good fidelity}

I now specialize the above analysis to the case of the operator convex function $-x^{\alpha}$ for
$\alpha\in(0,1)$, and I abbreviate the corresponding quasi-entropy as $Q_\alpha$. For this case, from \cite[Section~8]{HMPB11}, we have that
$
d\mu(t)=\frac{\sin(\alpha\pi)}{\pi}t^{\alpha-1}\ \text{d}t$.
Plugging into the quantities in Lemma~\ref{lem:quasi}, we find that%
\begin{align}
\mu([0,T])    =
\frac{\sin(\alpha\pi)}{\pi}\int_{0}^{T}t^{\alpha-1}\ \text{d}t
  =\frac{\sin(\alpha\pi)}{\alpha\pi}T^{\alpha}.
\end{align}
We also find that%
\begin{align}
&  g(\left\Vert \Delta_{AB\hat{A}\hat{B}}\right\Vert_{\infty} ,T)\nonumber\\
&  =\int_{T}^{\infty}\text{d}\mu(t)\ \frac{1}{1+t/\left\Vert \Delta_{AB\hat
{A}\hat{B}}\right\Vert_{\infty} }\\
&  =\frac{\sin(\alpha\pi)}{\pi}\int_{T}^{\infty}\text{d}t\ t^{\alpha-1}%
\frac{1}{1+t/\left\Vert \Delta_{AB\hat{A}\hat{B}}\right\Vert_{\infty} }\\
&  \leq\frac{\sin(\alpha\pi)}{\pi}\int_{T}^{\infty}\text{d}t\ t^{\alpha
-1}\frac{1}{t/\left\Vert \Delta_{AB\hat{A}\hat{B}}\right\Vert_{\infty} }\\
&  =\frac{\sin(\alpha\pi)\left\Vert \Delta_{AB\hat{A}\hat{B}}\right\Vert_{\infty} }{\pi
T^{1-\alpha}\left(  1-\alpha\right)  }.
\end{align}
Furthermore, we have that%
\begin{align}
\nu(X)  &  =\int_{0}^{\infty}\text{d}\mu(t)~t\left(  \frac{1}{t}-\frac{1}%
{t+X}\right) \\
&  =\frac{\sin(\alpha\pi)}{\pi}\int_{0}^{\infty}\text{d}t\ t^{\alpha
-1}t\left(  \frac{1}{t}-\frac{1}{t+X}\right) 
  =X^{\alpha}.
\end{align}
Substituting into \eqref{eq:intermediate}, we find that%
\begin{multline}
\left\Vert \left[  \sigma_{AB}^{1/2}\sigma_{A}^{-1/2}\Delta_{A\hat{A}}%
^{\alpha}\sigma_{A}^{1/2}-\Delta_{AB\hat{A}\hat{B}}^{\alpha}\sigma_{AB}%
^{1/2}\right]  \left\vert \Gamma\right\rangle _{A\hat{A}B\hat{B}}\right\Vert
_{2}\leq\label{eq:alpha-to-simplify}\\
\left[  \frac{\sin(\alpha\pi)}{\alpha\pi}T^{\alpha}\right]  ^{1/2}\left[
Q_{\alpha}(\rho_{AB}\Vert\sigma_{AB})-Q_{\alpha}(\rho_{A}\Vert\sigma
_{A})\right]  ^{1/2}\\
+\frac{2\sin(\alpha\pi)\left\Vert \Delta_{AB\hat{A}\hat{B}}\right\Vert_{\infty} }{\pi
T^{1-\alpha}\left(  1-\alpha\right)  }\operatorname{Tr}\{\sigma_{A}\}.
\end{multline}

We can consider this for an arbitrary $\alpha\in(0,1)$, but the most
interesting and physically relevant case seems to occur when $\alpha=1/2$. So
I now prove the claim in \eqref{eq:main-result}.

For $\alpha=1/2$, the lower bound in \eqref{eq:alpha-to-simplify}\ simplifies
to%
\begin{multline}
\left\Vert \left[  \sigma_{AB}^{1/2}\sigma_{A}^{-1/2}\rho_{A}^{1/2}-\rho
_{AB}^{1/2}\right]  \left\vert \Gamma\right\rangle _{A\hat{A}B\hat{B}%
}\right\Vert _{2}\\
=\left\Vert \sigma_{AB}^{1/2}\sigma_{A}^{-1/2}\rho_{A}^{1/2}-\rho_{AB}%
^{1/2}\right\Vert _{2},
\end{multline}
while the upper bound in \eqref{eq:alpha-to-simplify}\ becomes%
\begin{multline}
\left[  2/\pi\right]  ^{1/2}T^{1/4}\left[  Q_{1/2}(\rho_{AB}%
\Vert\sigma_{AB})-Q_{1/2}(\rho_{A}\Vert\sigma_{A})\right]  ^{1/2}\\
+\frac{4\left\Vert \Delta_{AB\hat{A}\hat{B}}\right\Vert_{\infty} }{\pi T^{1/2}%
}\operatorname{Tr}\{\sigma_{A}\}.
\end{multline}
Now minimizing over $T>0$ gives the choice%
\begin{equation}
T=\left[  \frac{8\left\Vert \Delta_{AB\hat{A}\hat{B}}\right\Vert_{\infty}
\operatorname{Tr}\{\sigma_{A}\}}{\left[  2\pi\right]  ^{1/2}\left[
Q_{1/2}(\rho_{AB}\Vert\sigma_{AB})-Q_{1/2}(\rho_{A}\Vert\sigma_{A})\right]
^{1/2}}\right]  ^{4/3},
\end{equation}
leading to the upper bound%
\begin{multline}
(3/[2^{2/3}])\Big[  (8/\pi^{2})  [Q_{1/2}(\rho
_{AB}\Vert\sigma_{AB})\\
-Q_{1/2}(\rho_{A}\Vert\sigma_{A})]\left\Vert \Delta_{AB\hat{A}\hat{B}%
}\right\Vert_{\infty} \operatorname{Tr}\{\sigma_{A}\}\Big]^{1/3}.
\end{multline}
Thus, the final inequality is%
\begin{multline}
\frac{\pi^{2}}{54\left\Vert \Delta_{AB\hat{A}\hat{B}}\right\Vert_{\infty}
\operatorname{Tr}\{\sigma_{A}\}}\left\Vert \sigma_{AB}^{1/2}\sigma_{A}%
^{-1/2}\rho_{A}^{1/2}-\rho_{AB}^{1/2}\right\Vert _{2}^{3}\\
\leq Q_{1/2}(\rho_{AB}\Vert\sigma_{AB})-Q_{1/2}(\rho_{A}\Vert\sigma_{A}).
\end{multline}
Using definitions, this is then equivalent to%
\begin{multline}
\frac{\pi^{2}}{54\left\Vert \Delta_{AB\hat{A}\hat{B}}\right\Vert_{\infty}
\operatorname{Tr}\{\sigma_{A}\}}\left\Vert \sigma_{AB}^{1/2}\sigma_{A}%
^{-1/2}\rho_{A}^{1/2}-\rho_{AB}^{1/2}\right\Vert _{2}^{3}\\
\leq\operatorname{Tr}\{\rho_{A}^{1/2}\sigma_{A}^{1/2}\}-\operatorname{Tr}%
\{\rho_{AB}^{1/2}\sigma_{AB}^{1/2}\}.
\end{multline}
The estimate from \cite[Lemma~2.2]{CV17} then gives
\begin{multline}
\frac{\pi^{2}/432}{\left\Vert \Delta_{AB\hat{A}\hat{B}}\right\Vert_{\infty}
\operatorname{Tr}\{\sigma_{A}\}}\left\Vert \sigma_{AB}^{1/2}\sigma_{A}%
^{-1/2}\rho_{A}\sigma_{A}^{-1/2}\sigma_{AB}^{1/2}-\rho_{AB}\right\Vert
_{1}^{3}\\
\leq\operatorname{Tr}\{\rho_{A}^{1/2}\sigma_{A}^{1/2}\}-\operatorname{Tr}%
\{\rho_{AB}^{1/2}\sigma_{AB}^{1/2}\}.
\end{multline}
Observe that $\left\Vert \Delta_{AB\hat{A}\hat{B}}\right\Vert_{\infty} =\left\Vert
\sigma_{AB}^{-1}\otimes\rho_{\hat{A}\hat{B}}^{T}\right\Vert_{\infty} \leq\frac
{1}{\lambda_{\min}(\sigma_{AB})}$ because $\rho_{AB}$ is a density operator.
So we then get
\begin{multline}
\frac{\pi^{2}}{54}\frac{\lambda_{\min}(\sigma_{AB})}{\operatorname{Tr}\{\sigma_{A}%
\}}\left\Vert \sigma_{AB}^{1/2}\sigma_{A}^{-1/2}\rho_{A}^{1/2}-\rho_{AB}%
^{1/2}\right\Vert _{2}^{3}\\
\leq\operatorname{Tr}\{\rho_{A}^{1/2}\sigma_{A}^{1/2}\}-\operatorname{Tr}%
\{\rho_{AB}^{1/2}\sigma_{AB}^{1/2}\},
\end{multline}%
\vspace{-.3in}
\begin{multline}
\frac{\pi^{2}}{432}\frac{\lambda_{\min}(\sigma_{AB})}{\operatorname{Tr}\{\sigma_{A}%
\}}\left\Vert \sigma_{AB}^{1/2}\sigma_{A}^{-1/2} \rho_{A} \sigma_{A}%
^{-1/2}\sigma_{AB}^{1/2}-\rho_{AB}\right\Vert _{1}^{3}\\
\leq\operatorname{Tr}\{\rho_{A}^{1/2}\sigma_{A}^{1/2}\}-\operatorname{Tr}%
\{\rho_{AB}^{1/2}\sigma_{AB}^{1/2}\},
\end{multline}
the latter of which being what was claimed in \eqref{eq:main-result}.



\textbf{Acknowledgements.} I thank Marco Piani and Anna Vershynina for discussions related to
the topic of this paper, and I acknowledge support from the NSF under grant no.~1714215.

\textbf{Note:} The results and proofs in  the main text of this paper were developed after \cite{CV17} but independently of arXiv:1710.08080 and were communicated privately by email in mid-October 2017.

\appendix

This appendix contains a generalization of the result in
\eqref{eq:main-result} to arbitrary quantum channels.

\begin{theorem}
Let $\omega$ be a density operator and $\tau$ a positive semi-definite
operator such that $\operatorname{supp}(\omega)\subseteq\operatorname{supp}%
(\tau)$. Let $\mathcal{N}$ be a quantum channel. Then%
\begin{multline}
\sqrt{F_{H}}(\mathcal{N}(\omega),\mathcal{N}(\tau))\geq \sqrt{F_{H}}(\omega,\tau)\\
+\frac{\pi^{2}}{432}\frac{\lambda_{\min}(\tau)}{\operatorname{Tr}\{\tau
\}}\left\Vert \mathcal{P}(\mathcal{N}(\omega))-\omega\right\Vert _{1}^{3},
\end{multline}
where $\lambda_{\min}(\tau)$ now denotes the minimum non-zero eigenvalue of
$\tau$ and $\mathcal{P}$ denotes the Petz recovery map for $\tau$ and
$\mathcal{N}$, defined~as%
\begin{equation}
\mathcal{P}(\cdot)=\tau^{1/2}\mathcal{N}^{\dag}[(\mathcal{N}(\tau
))^{-1/2}(\cdot)(\mathcal{N}(\tau))^{-1/2}]\tau^{1/2}.
\end{equation}

\end{theorem}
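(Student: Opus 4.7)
The plan is to reduce the general-channel claim to the partial-trace bound \eqref{eq:main-result} via a Stinespring dilation of $\mathcal{N}$. Write $\mathcal{N}(\cdot) = \operatorname{Tr}_E\{V(\cdot)V^{\dag}\}$ for some isometry $V:\mathcal{H}\to\mathcal{H}_B\otimes\mathcal{H}_E$, and set $\tilde{\omega}_{BE}\equiv V\omega V^{\dag}$, $\tilde{\tau}_{BE}\equiv V\tau V^{\dag}$. Then $\operatorname{Tr}_E\{\tilde{\omega}_{BE}\} = \mathcal{N}(\omega)$ and $\operatorname{Tr}_E\{\tilde{\tau}_{BE}\} = \mathcal{N}(\tau)$. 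Using $(V\rho V^{\dag})^{1/2} = V\rho^{1/2}V^{\dag}$ together with $V^{\dag}V = I$, one finds $\operatorname{Tr}\{\tilde{\omega}_{BE}^{1/2}\tilde{\tau}_{BE}^{1/2}\} = \operatorname{Tr}\{\omega^{1/2}\tau^{1/2}\}$, so that $F_{H}(\tilde{\omega}_{BE},\tilde{\tau}_{BE}) = F_{H}(\omega,\tau)$. Similarly, the nonzero spectrum of $\tilde{\tau}_{BE}$ coincides with that of $\tau$, giving $\lambda_{\min}(\tilde{\tau}_{BE}) = \lambda_{\min}(\tau)$, and $\operatorname{Tr}\{\tilde{\tau}_B\}=\operatorname{Tr}\{\tau\}$.

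Second, I would apply \eqref{eq:main-result} to $\tilde{\omega}_{BE}$ and $\tilde{\tau}_{BE}$ with partial trace over $E$, which after the above translations yields an inequality of the desired form save that the recovery term is $\|\mathcal{R}^{\tilde{\tau}}_{B\to BE}(\mathcal{N}(\omega)) - \tilde{\omega}_{BE}\|_1^3$ rather than $\|\mathcal{P}(\mathcal{N}(\omega)) - \omega\|_1^3$. The last step is to identify these two trace distances. Using $\tilde{\tau}_{BE}^{1/2} = V\tau^{1/2}V^{\dag}$ and $\mathcal{N}^{\dag}(Y) = V^{\dag}(Y\otimes I_E)V$, a direct calculation yields
\[
V^{\dag}\,\mathcal{R}^{\tilde{\tau}}_{B\to BE}(X)\,V = \tau^{1/2}\mathcal{N}^{\dag}\!\left[(\mathcal{N}(\tau))^{-1/2}X(\mathcal{N}(\tau))^{-1/2}\right]\tau^{1/2} = \mathcal{P}(X).
\]
Since the range of $\mathcal{R}^{\tilde{\tau}}_{B\to BE}$ is contained in $V(\mathcal{H})$ (because $\tilde{\tau}_{BE}^{1/2}=V\tau^{1/2}V^{\dag}$), this is equivalent to $V\mathcal{P}(X)V^{\dag} = \mathcal{R}^{\tilde{\tau}}_{B\to BE}(X)$, and isometric invariance of the trace norm then gives $\|\mathcal{R}^{\tilde{\tau}}_{B\to BE}(\mathcal{N}(\omega)) - \tilde{\omega}_{BE}\|_1 = \|\mathcal{P}(\mathcal{N}(\omega)) - \omega\|_1$, completing the reduction.

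The main obstacle is that \eqref{eq:main-result} requires $\sigma_{AB}$ (here $\tilde{\tau}_{BE}$) to be positive definite, yet any nontrivial Stinespring dilation makes $\tilde{\tau}_{BE}$ rank-deficient on $BE$. To handle this, I would either restrict the ambient space to $\operatorname{supp}(\tilde{\tau}_{BE})$---which contains $\operatorname{supp}(\tilde{\omega}_{BE})$ by the hypothesis $\operatorname{supp}(\omega)\subseteq\operatorname{supp}(\tau)$---and verify that Lemma~\ref{lem:quasi} and its specialization to $\alpha=1/2$ go through unchanged on this restricted space, or else replace $\tau$ by $\tau+\epsilon I$, invoke \eqref{eq:main-result}, and send $\epsilon\to 0^{+}$ using continuity of $F_{H}$ and of the Petz map $\mathcal{P}$ on the support of $\mathcal{N}(\tau)$. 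Apart from this positive-definiteness wrinkle, the remainder of the argument is routine Stinespring bookkeeping.
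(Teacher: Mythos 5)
Your proposal is correct and follows essentially the same route as the paper: a Stinespring dilation reduces the channel statement to the partial-trace bound \eqref{eq:main-result}, after which isometric invariance of $F_H$, of the minimum non-zero eigenvalue, and of the trace norm, together with the identity $\mathcal{R}^{\tilde{\tau}}(\cdot)=V\mathcal{P}(\cdot)V^{\dag}$, translate every term. Your explicit flagging of the positive-definiteness issue (and the restriction-to-support or $\epsilon$-regularization fix) matches the paper's own remark that one restricts to $\operatorname{supp}(\sigma_{AB})$ without loss of generality.
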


\begin{proof}
Let us start by returning to \eqref{eq:main-result} and reflecting on its
statement as well as its proof. If $\operatorname{supp}(\rho_{AB}%
)\subseteq\operatorname{supp}(\sigma_{AB})$, then without loss of generality,
we can restrict the whole space of systems $A$ and $B$ to the support of
$\sigma_{AB}$ and the inequality in \eqref{eq:main-result} holds with
$\lambda_{\min}(\sigma_{AB})$ equal to the minimum non-zero eigenvalue of
$\sigma_{AB}$. Now we can apply this result, as well as the well known
Stinespring dilation theorem, in order to arrive at the statement of the
theorem. Stinespring's theorem states that for a quantum channel $\mathcal{N}$
acting on a state $\omega$ of a system~$S$, there exists an isometry
$U_{S\rightarrow AB}$ such that%
\begin{equation}
\mathcal{N}(\omega)=\operatorname{Tr}_{B}\{U_{S\rightarrow AB}\omega
(U_{S\rightarrow AB})^{\dag}\}.
\end{equation}
So we pick%
\begin{align}
\rho_{AB}  & =U_{S\rightarrow AB}\omega(U_{S\rightarrow AB})^{\dag},\\
\sigma_{AB}  & =U_{S\rightarrow AB}\tau(U_{S\rightarrow AB})^{\dag},
\end{align}
so that
$
\rho_{A}   =\mathcal{N}(\omega)$, 
$\sigma_{A}   =\mathcal{N}(\tau)$,
and then find that%
\begin{multline}
\sqrt{F_{H}}(\mathcal{N}(\omega),\mathcal{N}(\tau))\geq \\
\sqrt{F_{H}}(U_{S\rightarrow
AB}\omega(U_{S\rightarrow AB})^{\dag},U_{S\rightarrow AB}\tau
(U_{S\rightarrow AB})^{\dag})\\
+\frac{\pi^{2}}{432}\frac{\lambda_{\min}(U_{S\rightarrow AB}\tau
(U_{S\rightarrow AB})^{\dag})}{\operatorname{Tr}\{\mathcal{N}(\tau)\}}\times\\
\left\Vert \mathcal{R}^{\sigma}_{A\rightarrow AB}(\mathcal{N}(\omega))-U_{S\rightarrow
AB}\omega(U_{S\rightarrow AB})^{\dag}\right\Vert _{1}^{3}.
\end{multline}
Due to isometric invariance of Holevo's just-as-good fidelity and the minimum
non-zero eigenvalue, and the fact that $\mathcal{N}$ is trace preserving, we
find that%
\begin{align}
F_{H}(U_{S\rightarrow AB}\omega(U_{S\rightarrow AB})^{\dag},U_{S\rightarrow
AB}\omega(U_{S\rightarrow AB})^{\dag})  & =F_{H}(\omega,\tau), \nonumber \\
\lambda_{\min}(U_{S\rightarrow AB}\tau(U_{S\rightarrow AB})^{\dag})  &
=\lambda_{\min}(\tau), \nonumber \\
\operatorname{Tr}\{\mathcal{N}(\tau)\}  & =\operatorname{Tr}\{\tau\}.
\end{align}
Also, the Petz map
$\mathcal{R}^{\sigma}_{A\rightarrow AB}$
simplifies for our choices as%
\begin{align}
& \mathcal{R}^{\sigma}_{A\rightarrow AB}(\cdot)\nonumber\\
& =\sigma_{AB}^{1/2}[\sigma_{A}^{-1/2}(\cdot)\sigma_{A}^{-1/2}\otimes
I_{B}]\sigma_{AB}^{1/2}\nonumber \\
& =[U_{S\rightarrow AB}\tau(U_{S\rightarrow AB})^{\dag}]^{1/2}[\mathcal{N}%
(\tau)^{-1/2}(\cdot)\mathcal{N}(\tau)^{-1/2}\otimes I_{B}]\nonumber\\
& \qquad \times\lbrack U_{S\rightarrow AB}\tau(U_{S\rightarrow AB})^{\dag}]^{1/2}\nonumber \\
& =U_{S\rightarrow AB}\tau^{1/2}(U_{S\rightarrow AB})^{\dag}[\mathcal{N}%
(\tau)^{-1/2}(\cdot)\mathcal{N}(\tau)^{-1/2}\otimes I_{B}]\nonumber\\
& \qquad \times U_{S\rightarrow AB}\tau^{1/2}(U_{S\rightarrow AB})^{\dag}\nonumber \\
& =U_{S\rightarrow AB}\tau^{1/2}\mathcal{N}^{\dag}[\mathcal{N}(\tau
)^{-1/2}(\cdot)\mathcal{N}(\tau)^{-1/2}]\tau^{1/2}(U_{S\rightarrow AB})^{\dag
}.
\end{align}
Isometric invariance of the trace norm and the above then gives
\begin{multline}
\left\Vert \mathcal{R}^{\sigma}_{A\rightarrow AB}(\mathcal{N}(\omega))-U_{S\rightarrow
AB}\omega(U_{S\rightarrow AB})^{\dag}\right\Vert _{1}\\
=\left\Vert \mathcal{P}(\mathcal{N}(\omega))-\omega\right\Vert _{1},
\end{multline}
concluding the proof.
\end{proof}

\bibliographystyle{IEEEtran}
\bibliography{Ref}

\end{document}